\newtheorem{proposition}{Proposition}
\newtheorem{theorem}{Theorem}
\newtheorem{corollary}{Corollary}
\newtheorem{lemma}{Lemma}
\theoremstyle{definition}
\newtheorem{remark}{Remark}
\newcommand{\range}[1]{\{1,\dots,#1\}}
\newcommand{\id}{\mathbbm{1}}
\newcommand{\card}[1]{|{#1}|}
\newcommand{\Q}{\mathbb{Q}}
\renewcommand{\L}{\mathcal{L}}
\DeclareMathOperator{\F}{F}
\DeclareMathOperator{\Hom}{Hom}
\DeclareMathOperator{\Aut}{Aut}
\DeclareMathOperator{\Ker}{Ker}
\DeclareMathOperator{\M}{M}
\DeclareMathOperator{\wt}{wt}
\DeclareMathOperator{\swc}{swc}
\DeclareMathOperator{\soc}{soc}
\DeclareMathOperator{\rad}{rad}
\title{When the extension property does not hold}
\author{Serhii Dyshko}
\email{dyshko@univ-tln.fr}
\begin{document}

\begin{abstract}
A complete extension theorem for linear codes over a module alphabet and the symmetrized weight composition is proved. It is shown that an extension property with respect to arbitrary weight function does not hold for module alphabets with a noncyclic socle.
\end{abstract}

\maketitle

\section{Introduction}
The famous MacWilliams Extension Theorem states that each linear Hamming isometry of a linear code extends to a monomial map. The result was originally proved in the MacWilliams' Ph.D. thesis, see \cite{macwilliams-phd61}, and was later generalized for linear codes over module alphabets.

Starting from the work \cite{ward-wood} of Ward and Wood, where they used the character theory to get an easy proof of the classical MacWilliams Extension Theorem, several generalizations of the result appeared in works of Dinh, L\'{o}pez-Permouth, Greferath, Wood, and others, see \cite{dinh-lopez-1, dinh-lopez, greferath, wood-foundations}. For finite rings and the Hamming weight, it was proved that the extension theorem holds for linear codes over a module alphabet if and only if the alphabet is pseudo-injective and has a cyclic socle, see \cite{wood-foundations}. 

Regarding the symmetrized weight composition, the extension theorem for the case of classical linear codes was proved by Goldberg in \cite{goldberg}.
There is a recent result in \cite{elgarem-megahed-wood} where the authors proved that if an alphabet has a cyclic socle, then an analogue of the extension theorem holds for the symmetrized weight composition built on arbitrary group. The result was improved in \cite{assem} where the author showed, in some additional assumptions, the maximality of the cyclic socle condition for the symmetrized weight composition built on the full automorphism group of an alphabet. There the author showed the relation between extension properties with respect to the Hamming weight and the symmetrized weight composition. 

There exist various results on the extension property for arbitrary weight functions, and particularly for homogeneous weights and the Lee weight, as for example in \cite{barra}, \cite{greferath-biinvariant} and \cite{langevin}.

In this paper we give the complete proof of the maximality of cyclic socle condition for the extension theorem in the context of codes over a module alphabet and symmetrized weight composition built on arbitrary group, see \Cref{thm:cor1}. The result is used to show that for a noncyclic socle alphabet and arbitrary weight function the extension property does not hold, see \Cref{thm:anyweight-noncyclic-socle}.

\section{Preliminaries}
Let $R$ be a ring with identity and let $A$ be a finite left $R$-module.
Consider a group $\Aut_R(A)$ of all $R$-linear automorphisms of $A$. Let $G$ be a subgroup of $\Aut_R(A)$. Consider the action of $G$ on $A$ and denote by $A/G$ the set of orbits.

Let $\F(X,Y)$ denote the set of all maps from the set $X$ to the set $Y$.

Let $n$ be a positive integer and consider a module $A^n$.
Define a map $\swc_G: A^n \rightarrow \F(A/G,\Q)$, called the \emph{symmetrized weight composition} built on the group $G$. For each $a \in A^n$, $O \in A/G$,
\begin{equation*}
\swc_G(a)(O) =  \card{ \{i \in \range{n} \mid a_i \in O\} }\;.
\end{equation*}
The \emph{Hamming weight} $\wt: A^n \rightarrow \{0, \dots, n\}$ is a function that counts the number of nonzero coordinates.
There is always a zero orbit $\{0\}$ in $A/G$. For each $a \in A^n$, $\swc_G(a)(\{0\}) = n - \wt(a)$.

Consider a linear code $C \subseteq A^n$ and a map $f \in \Hom_R(C,A^n)$. The map $f$ is called an \emph{$\swc_G$-isometry} if $f$ preserves $\swc_G$. We call $f$ a \emph{Hamming isometry} if $f$ preserves the Hamming weight.

A \emph{closure} of a subgroup $G \leq \Aut_R(A)$, denoted $\bar{G}$, is defined as,
\begin{equation*}
\bar{G}= \{ g \in \Aut_R(A) \mid \forall O \in A/G, g(O) = O\}\;.
\end{equation*}

Obviously, $G$ is a subgroup of $\bar{G}$. If $G = \bar{G}$, then the group is called \emph{closed}. Also, $\swc_G = \swc_{\bar{G}}$, since both groups have the same orbits. More on a closure of a group and its properties see in \cite{wood-aut-iso}.

A map $h:A \rightarrow A$ is called \emph{$G$-monomial} if there exist a permutation $\pi \in S_n$ and automorphisms $g_1, g_2, \dots, g_n \in \bar{G}$ such that for any $a \in A^n$
	\begin{equation*}
		h\left( (a_1, a_2, \dots,a_n) \right) = \left( g_1(a_{\pi(1)}), g_2(a_{\pi(2)}) \dots, g_n(a_{\pi(n)})\right)
	\end{equation*}
	
It is not difficult to show that a map $f \in \Hom_R(A^n, A^n)$ is an $\swc_G$-isometry if and only it is a $G$-monomial map.

We say that $A$ has an \emph{extension property} with respect to $\swc_G$ if for any code $C \subseteq A^n$, each $\swc_G$-isometry $f \in \Hom_R(C,A^n)$ extends to a $G$-monomial map. 

\vspace{1em}

\noindent\textbf{Characters and the Fourier transform.}
Let $A$ be a left $R$-module. The module $A$ can be seen as an abelian group, i.e., $A$ is a $\mathbb{Z}$-module. Consider a multiplicative $\mathbb{Z}$-module $\mathbb{C}^*$. Denote by $\hat{A} = \Hom_\mathbb{Z}(A, \mathbb{C}^*)$ the set of characters of $A$. The set $\hat{A}$ has a natural structure of a right $R$-module, see \cite{greferath} (Section 2.2).

Let $W$ be a left $R$-module. The Fourier transform of a map $f: W \rightarrow \mathbb{C}$ is a map $\mathcal{F}(f): \hat{W} \rightarrow \mathbb{C}$, defined as
\begin{equation*}
\mathcal{F}(f)(\chi) = \sum_{w \in W} f(w)\chi(w)\;.
\end{equation*}
Recall the indicator function of a subset $Y$ of a set $X$ is a map $\id_Y: X \rightarrow \{0,1\}$, such that $\id_Y(x) = 1$ if $x \in Y$ and $\id_Y(x) = 0$ otherwise.
For a submodule $V \subseteq M$, $$\mathcal{F}(\id_V) = \card{V} \id_{V^\perp}\;,$$ where the dual module $V^\perp \subseteq \hat{M}$ is defined as 
$$V^\perp = \{ \chi \in \hat{M} \mid \forall v \in V, \chi(v) = 1 \}\;.$$
Note that the Fourier transform is invertible, $V^{\perp\perp} \cong V$, see \cite{greferath} (Section 2.2), and it is true that for any $R$-submodules $V, U \subseteq W$, $(V \cap U)^\perp = V^\perp + U^\perp$.

\section{Extension criterium}\label{sec:ext-criterium}
Let $W$ be a left $R$-module isomorphic to $C$.
Let $\lambda \in \Hom_R(W, {A^n})$ be a map such that $\lambda(W) = C$.
Present the map $\lambda$ in the form $\lambda = (\lambda_1,\dots, \lambda_n)$, where $\lambda_i \in \Hom_R(W,{A})$ is a projection on the $i$th coordinate, for $i \in \range{n}$.
Let $f: C \rightarrow A^n$ be a homomorphism of left $R$-modules.
Define $\mu = f\lambda \in \Hom_R(W,A^n)$.
\iffalse
\begin{figure}[!ht]
	\centering
	\begin{tikzcd}
		W \arrow{r}{\lambda} \arrow[swap]{rd}{\mu} & C \arrow{d}{f}\\
		& A^n  	
	\end{tikzcd}
	\caption{The maps $\lambda$ and $\mu$.}
	\label{fig-parametriztion}
\end{figure}
\fi

An $R$-module $A$ is called \emph{$G$-pseudo-injective}, if for any submodule $B \subseteq A$, each injective map $f \in \Hom_R(B,A)$, such that for any $O \in A/G$, $f(O \cap B) \subseteq O$, extends to an element of $\bar{G}$.

\begin{proposition}\label{thm-isometry-criterium}
	The map $f \in \Hom_R (C,A^n)$ is an $\swc_G$-isometry if and only if for any $O \in A/G$, the following equality holds,
	\begin{equation}\label{eq-main-space-equation}
	\sum_{i=1}^n \id_{\lambda_i^{-1}(O)} = \sum_{i=1}^n \id_{\mu_i^{-1}(O)}\;.
	\end{equation}	
	If $f$ extends to a $G$-monomial map, then there exists a permutation $\pi \in S_n$ such that for each $O \in A/G$ the equality holds,
	\begin{equation}\label{eq-condition}
	\lambda_i^{-1}(O) = \mu_{\pi(i)}^{-1}(O)\;.
	\end{equation}
	If $A$ is $G$-pseudo-injective and \cref{eq-condition} holds, then $f$ extends to a $G$-monomial map. 
\end{proposition}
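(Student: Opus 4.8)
The plan is to dispatch the three claims in turn; the first two amount to unwinding the definitions, while the converse in the third is where the $G$-pseudo-injectivity hypothesis does the real work. For the equivalence, I would compute $\swc_G$ along $\lambda$: for $w \in W$ and $O \in A/G$ one has $\swc_G(\lambda(w))(O) = \card{\{i : \lambda_i(w) \in O\}} = \sum_{i=1}^n \id_{\lambda_i^{-1}(O)}(w)$, and since $\mu = f\lambda$, also $\swc_G(f(\lambda(w)))(O) = \sum_{i=1}^n \id_{\mu_i^{-1}(O)}(w)$; because $\lambda$ is onto $C$, the map $f$ preserves $\swc_G$ on $C$ if and only if these two functions on $W$ agree for every orbit $O$, i.e.\ \cref{eq-main-space-equation}. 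For the second claim, if $h \colon A^n \to A^n$ is a $G$-monomial extension of $f$ with $i$-th coordinate $g_i(a_{\pi(i)})$ and $g_i \in \bar{G}$, then $\mu = f\lambda = h\lambda$ gives $\mu_i = g_i \circ \lambda_{\pi(i)}$, and since each $g_i$ fixes every orbit setwise we get $\mu_i^{-1}(O) = \lambda_{\pi(i)}^{-1}(g_i^{-1}(O)) = \lambda_{\pi(i)}^{-1}(O)$ for all $O$; re-indexing $j = \pi(i)$ yields \cref{eq-condition} with $\pi^{-1}$ in place of $\pi$.

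The substantive direction is the converse. Assume $A$ is $G$-pseudo-injective and that \cref{eq-condition} holds for a fixed $\pi \in S_n$; I would build the extension coordinate by coordinate. Fix $i$ and read \cref{eq-condition} at the zero orbit $O = \{0\}$: this gives $\Ker \lambda_i = \Ker \mu_{\pi(i)}$, so the rule $\psi_i(\lambda_i(w)) := \mu_{\pi(i)}(w)$ is a well-defined injective $R$-homomorphism on the submodule $B_i := \lambda_i(W) \subseteq A$. For an arbitrary orbit $O$ and $a = \lambda_i(w) \in O \cap B_i$ we have $w \in \lambda_i^{-1}(O) = \mu_{\pi(i)}^{-1}(O)$, so $\psi_i(a) = \mu_{\pi(i)}(w) \in O$; hence $\psi_i(O \cap B_i) \subseteq O$, and $G$-pseudo-injectivity lets us extend $\psi_i$ to some $g_i \in \bar{G}$. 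Then the map $h$ whose $j$-th coordinate is $g_{\pi^{-1}(j)}(a_{\pi^{-1}(j)})$ is $G$-monomial and satisfies $h\lambda = \mu = f\lambda$; since $\lambda$ is onto $C$, $h|_C = f$, so $f$ extends to a $G$-monomial map.

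There is no deep obstacle here: the proposition is careful bookkeeping together with a coordinatewise use of $G$-pseudo-injectivity. The two points that genuinely need attention are the well-definedness of $\psi_i$, which is exactly \cref{eq-condition} read at the zero orbit, and the hypothesis $\psi_i(O \cap B_i) \subseteq O$ of the $G$-pseudo-injectivity definition, which is \cref{eq-condition} read at a general orbit; once these are in place, the construction of $h$ and the verification that it extends $f$ are immediate.
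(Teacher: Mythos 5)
Your proposal is correct and follows essentially the same route as the paper: the equivalence and the necessity of \cref{eq-condition} are the same definitional unwindings, and the sufficiency argument builds $g_i \in \bar{G}$ coordinatewise from $G$-pseudo-injectivity applied to the map $\lambda_i(w) \mapsto \mu_{\pi(i)}(w)$ on $\lambda_i(W)$. The only cosmetic difference is that you define this map directly on the image $\lambda_i(W)$, while the paper first passes to the quotient $W/\Ker\lambda_i$ and composes $\bar{\mu}_{\pi(i)}\bar{\lambda}_i^{-1}$; the two are the same construction.
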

\begin{proof}
	For any $w \in W$, $O \in A/G$,
	\begin{equation*}
	\swc(\lambda(w))(O) = \sum_{i=1}^n \id_{O}(\lambda_i(w)) = \sum_{i=1}^n \id_{\lambda_i^{-1}(O)}(w)\;.
	\end{equation*}
	Therefore, the map $f$ is an $\swc_G$-isometry if and only if \cref{eq-main-space-equation} holds.
	
	If $f$ is extendable to a $G$-monomial map with a permutation $\pi \in S_n$ and automorphisms $g_1, \dots, g_n \in \bar{G}$, then for all $i \in \range{n}$, $\mu_{\pi(i)} = g_i \lambda_i$. Hence, for all $O \in A/G$, $\mu_{\pi(i)}^{-1}(O) = \lambda_i^{-1} (g_i^{-1}(O)) = \lambda_i^{-1} (O)$.
	
	Prove the last part. Fix $i \in \range{n}$. From \cref{eq-condition} calculated in the orbit $\{0\}$, $\Ker \lambda_i = \Ker \mu_{\pi(i)} = N \subseteq W$. Consider the injective maps $\bar{\lambda}_i, \bar{\mu}_{\pi(i)} : W/N \rightarrow A$ such that $\bar{\lambda}_i(\bar{w}) = \lambda_i(w)$ and $\bar{\mu}_{\pi(i)}(\bar{w}) = \mu_{\pi(i)}(w)$ for all $w \in W$, where $\bar{w} = w + N$.
	
	One can verify that for all $O \in A/G$, $\bar{\lambda}_i^{-1}(O) = \bar{\mu}_{\pi(i)}^{-1}(O)$. Then, it is true that for the injective map $h_i = \bar{\mu}_{\pi(i)}\bar{\lambda}_i^{-1} \in \Hom_R(A, A)$, $h(O \cap \bar{\lambda}_i(W/N)) \subseteq O$, for all $O \in A/G$. Since $A$ is $G$-pseudo-injective, there exists a $G$-monomial map $g_i$ such that $g_i = h_i$ on $\bar{\lambda}_i(W/N) = \lambda_i(W) \subseteq A$. It is easy to check that $\lambda_i = g_i \mu_{\pi(i)}$. Hence $f$ extends to a $G$-monomial map.
\end{proof}

%\begin{remark}
%The condition of $G$-pseudo-injectivity can be strengthened to the following: for any $i \in \range{n}$, each injective map $f \in \Hom_R(\lambda_i(W), A)$ that preserves the orbits in $A/G$ extends to an element of $\bar{G}$.
%\end{remark}

\begin{proposition}
	A module $A$ is $G$-pseudo-injective if and only if for any code $C \subset A^1$, each $\swc_G$-isometry $f \in \Hom_R(C,A)$ extends to a $G$-monomial map.
\end{proposition}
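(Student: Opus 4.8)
The plan is to reduce both implications to \Cref{thm-isometry-criterium} in the single-coordinate case $n=1$, where matters simplify: $A^1 = A$, the group $S_1$ is trivial, and a $G$-monomial map on $A^1$ is exactly an element of $\bar{G}$.

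\emph{Forward implication.} Suppose $A$ is $G$-pseudo-injective and let $f \in \Hom_R(C,A)$ be an $\swc_G$-isometry for a code $C \subseteq A^1$. Take $W = C$ and let $\lambda \colon W \to A$ be the inclusion, so $\mu = f\lambda = f$. With a single coordinate, \cref{eq-main-space-equation} reads $\id_{\lambda^{-1}(O)} = \id_{\mu^{-1}(O)}$, i.e.\ $\lambda^{-1}(O) = \mu^{-1}(O)$ for every $O \in A/G$, and this is precisely \cref{eq-condition} with the identity permutation of $\{1\}$. Thus, by the first part of \Cref{thm-isometry-criterium}, $f$ being an $\swc_G$-isometry yields \cref{eq-main-space-equation}, hence \cref{eq-condition}; and then the last part of \Cref{thm-isometry-criterium}, together with $G$-pseudo-injectivity, shows that $f$ extends to a $G$-monomial map.

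\emph{Converse.} Assume every $\swc_G$-isometry on a code in $A^1$ extends to a $G$-monomial map, and let $B \subseteq A$ be a submodule with an injective $f \in \Hom_R(B,A)$ such that $f(O \cap B) \subseteq O$ for all $O \in A/G$; we must extend $f$ to an element of $\bar{G}$. Regard $B$ as a code $C \subseteq A^1$. The key point is that $f$ is then automatically an $\swc_G$-isometry: for $b \in B$ and $O \in A/G$ we have $\swc_G(b)(O) = 1$ exactly when $O$ is the orbit of $b$, so preserving $\swc_G$ amounts to $f(b)$ lying in the orbit of $b$ for every $b$; and this holds since, taking $O$ to be the orbit of $b$, we have $b \in O \cap B$ and hence $f(b) \in f(O\cap B) \subseteq O$. (What makes the one-sided hypothesis $f(O\cap B)\subseteq O$ suffice is that $A/G$ is a partition of $A$, so membership of $f(b)$ in one orbit already determines its orbit.) By hypothesis $f$ then extends to a $G$-monomial map $A^1 \to A^1$, which — $S_1$ being trivial — is exactly an element of $\bar{G}$ restricting to $f$ on $B$. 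Hence $A$ is $G$-pseudo-injective.

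The whole argument is bookkeeping once \Cref{thm-isometry-criterium} is available; the only real content is the equivalence, in the single-coordinate case, between the pseudo-injectivity hypothesis on $f$ and the $\swc_G$-isometry condition, which is where one must invoke that $A/G$ is a partition of $A$. I would also record the harmless remark, used implicitly, that an $\swc_G$-isometry on a code has trivial kernel because it preserves the count attached to the zero orbit $\{0\}$, so no injectivity needs to be assumed separately in the forward direction.
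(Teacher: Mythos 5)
Your proof is correct and rests on the same observation as the paper's: in the single-coordinate case, the pseudo-injectivity condition $f(O\cap B)\subseteq O$ for all $O$ is equivalent to $f$ being an $\swc_G$-isometry, because $\swc_G$ on $A^1$ just records which orbit each element lies in. The paper phrases this as a single contrapositive equivalence chain whereas you split it into two directions and route the forward one through \Cref{thm-isometry-criterium}, but that is only a presentational difference.
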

\begin{proof}
	Prove the contrapositive. By definition, $A$ is not $G$-pseudo-injective, if there exists a module $C \subseteq A$ and an injective map $f \in \Hom_R(C,A)$, such that for each $O \in A/G$, $f(O \cap C) \subseteq O$, but $f$ does not extend to an automorphism $g \in \bar{G}$. Equivalently, $\swc_G(x) = \swc_G(f(x))$, for all $x \in C$, yet $f$ does not extend to a $G$-monomial map.
\end{proof}

\begin{remark}
A module $A$ is called \emph{pseudo-injective}, if for any module $B \subseteq A$, each injective map $f \in \Hom_R(B,A)$ extends to an endomorphism in $\Hom_R(A,A)$.
In \cite{dinh-lopez-1} and \cite{wood-foundations} the authors used the property of pseudo-injectivity to describe the extension property for the Hamming weight. They showed that an alphabet is not pseudo-injective if and only if there exists a linear code $C \subset A^1$ with an unextendable Hamming isometry.
\end{remark}

\begin{remark}
	Not all the modules are $G$-pseudo-injective. It is even true that not all pseudo-injective modules are $G$-pseudo-injective, for some $G \leq \Aut_R(A)$. In our future paper we will give a description of $G$-pseudo-injectivity of finite vector spaces. Apparently, despite the fact that vector spaces are pseudo-injective, almost all vector spaces, except a few families, are not $G$-pseudo-injective for some $G$.
\end{remark}

\section{Matrix module alphabet}\label{sec:mm}
Let $R = \M_{k\times k} (\mathbb{F}_q)$ be the ring of $k \times k$ matrices over the finite field $\mathbb{F}_q$, where $k$ is a positive integer and $q$ is a prime power. It is proved in \cite[p.~656]{lang} that each left(right) module $R$-module $U$ is 
isomorphic to $\M_{k \times t}(\mathbb{F}_q)$ ($\M_{t \times k}(\mathbb{F}_q)$), for some nonnegative integer $t$. 
Call the integer $t$ a \emph{dimension} of $U$ and denote $\dim U = t$.

Let $m$ be a positive integer, $m > k$. 
Let $M$ be an $m$-dimensional left(right) $R$-module.
Let $\L(M)$ be the set of all $R$-submodules in $M$. Consider a poset $(\L(M), \subseteq)$ and define a map
$$E : \F(\L(M), \Q) \rightarrow \F(M, \Q)\;,\quad \eta \mapsto \sum_{U \in \L(M)} \eta(U)\id_{U}\;.$$
The set $\F(\L(M),\Q)$ has a structure of an $\card{\L(M)}$-dimensional vector space over the field $\Q$. In the same way, $\F(M, \Q)$ is an $\card{M}$-dimensional $\Q$-linear vector space. The map $E$ is a $\Q$-linear homomorphism.
Similar notions of a \emph{multiplicity function} and the ``\emph{$W$ function}" were observed in \cite{wood-aut-iso} and \cite{wood-foundations}.

Let $V$ be a submodule of $M$. Consider a map in $\F(\L(M),\Q)$,
\begin{equation*}
\eta_V(U) = \left\lbrace\begin{array}{ll}
(-1)^{\dim V - \dim U} q^{\binom{\dim V - \dim U}{2}}& \text{, if } U \subseteq V;\\
0& \text{, otherwise}. 
\end{array}
\right.
\end{equation*}
In fact, $\eta_V(U) = \mu(U,V)$, where $\mu$ is the M\"{o}bius function of the poset $(\L(M),\subseteq)$, see \cite{wood-foundations} (Remark 4.1).

\begin{lemma}\label{lemma:eta-v}
	For any $V \subseteq M$, if $\dim V > k$, then $E(\eta_V) = 0$.
\end{lemma}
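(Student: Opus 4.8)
The plan is to evaluate $E(\eta_V)$ at an arbitrary point of $M$ and to recognize the result as the indicator function of the set of cyclic generators of $V$. Fix $x \in M$ and let $Rx$ denote the cyclic submodule it generates. Since $x \in U$ if and only if $Rx \subseteq U$, and since $\eta_V(U) = \mu(U,V)$ for $U \subseteq V$ while $\eta_V(U) = 0$ for $U \not\subseteq V$, I would compute
\begin{equation*}
E(\eta_V)(x) = \sum_{U \in \L(M)} \eta_V(U)\,\id_U(x) = \sum_{Rx \subseteq U \subseteq V} \mu(U,V)\;,
\end{equation*}
the last sum being empty, hence equal to $0$, unless $x \in V$.

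Next I would apply the defining recursion of the M\"{o}bius function of the poset $(\L(M),\subseteq)$, namely $\sum_{W \subseteq U \subseteq V}\mu(U,V) = \delta_{W,V}$ for every submodule $W \subseteq V$. Taking $W = Rx$, which is legitimate exactly when $x \in V$, this yields $E(\eta_V)(x) = \delta_{Rx,V}$ for all $x \in V$; together with the preceding paragraph we conclude that $E(\eta_V) = \id_{\{x \in M \,:\, Rx = V\}}$ is the indicator function of the set of elements generating $V$ as a cyclic $R$-module.

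It then suffices to show that $V$ admits no such generator once $\dim V > k$. Every cyclic left $R$-module $Rx$ is a homomorphic image of the left regular module ${}_{R}R \cong \M_{k \times k}(\mathbb{F}_q)$, whose dimension is $k$; and since $\card{U} = q^{k \dim U}$ for every $U \in \L(M)$, dimension is additive on short exact sequences of $R$-modules, so $\dim(Rx) \le k$. Hence $\dim V > k$ forces $\{x \in M : Rx = V\} = \emptyset$ and therefore $E(\eta_V) = 0$. The right-module version of the statement is proved verbatim, with ${}_{R}R$ replaced by the right regular module $R_{R}$, again of dimension $k$.

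I do not anticipate a genuine obstacle, as the whole argument is a short computation. The step requiring the most care is turning the explicit alternating-sign formula defining $\eta_V$ into an instance of the M\"{o}bius recursion --- i.e.\ observing that $\sum_{U \in [Rx,V]} \mu(U,V)$ collapses to a Kronecker delta --- since this is the only place where the particular shape of $\eta_V$ (equivalently, the identity $\eta_V = \mu(\,\cdot\,,V)$) enters. Once $E(\eta_V)$ is identified with the indicator of the cyclic generators of $V$, the bound $\dim(Rx) \le k = \dim {}_{R}R$ finishes the proof at once.
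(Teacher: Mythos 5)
Your proof is correct and follows essentially the same route as the paper: expand $E(\eta_V)(x)$ as a sum of $\mu(U,V)$ over the interval $[Rx,V]$ and observe it vanishes because $Rx$ is a proper submodule of $V$ (as $\dim Rx \le k < \dim V$). The only cosmetic difference is that you invoke the ``dual'' M\"{o}bius recursion $\sum_{W \subseteq U \subseteq V}\mu(U,V)=\delta_{W,V}$ directly (strictly speaking this is the dual of the usual defining recursion, not the definition itself), whereas the paper reaches the same conclusion by explicitly passing to the opposite poset via Rota's duality $\mu(U,V)=\mu^*(V,U)$ --- two phrasings of the identical step.
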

\begin{proof}
	First, note that for a submodule $U \subseteq M$ and an element  $x \in M$ the inclusion $x \in U$ holds if and only if for the cyclic module $xR$ the inclusion $xR \subseteq U$ holds. Calculate, for any $x \in M$,
\begin{align*}
E(\eta_V)(x) = \sum_{U \in \L(M)} \eta_V(U) \id_U(x) = \sum_{U \subseteq V} \mu(U,V) \id_U(x) = \sum_{xR \subseteq U \subseteq V} \mu(U,V)\;.
\end{align*}
From the duality of the M\"{o}bius function, see \cite{rota} (Proposition 3),
the last sum is equal to $\sum_{V \supseteq U \supseteq xR} \mu^*(V,U)$, where $\mu^*$ is the M\"{o}bius function of the poset $(\L(M), \supseteq)$.
Since $\dim xR \leq \dim R_R = k < \dim V$, $V \supset xR$. From the definition of the M\"{o}bius function the resulting sum equals $0$.
\end{proof}

Let $\ell$ be a positive integer, $m \geq \ell > k$. Fix a submodule $X$ in $M$ of dimension $m - l$. Define two subsets of $\L(M)$,
$$S_{=\ell} = \{ V \in \L(M) \mid \dim V = \ell, V \cap X = \{0\} \}\;,$$ 
$$S_{<\ell} = \{ V \in \L(M) \mid \dim V < \ell, V \cap X = \{0\} \}\;.$$ Consider a map,
\begin{equation*}
E': \F(S_{=\ell}, \Q) \rightarrow \F(S_{<\ell}, \Q), \quad \xi \mapsto \sum_{V \in S_{=\ell}} \xi(V) \eta_V\;.
\end{equation*}
The map $E'$ is a $\Q$-linear homomorphism of $\Q$-linear vector spaces.
\begin{lemma}\label{lemma:number-nonint}
	Let $A$ be an $a$-dimensional $\mathbb{F}_q$-linear vector space and let $B$ be its $b$-dimension subspace. Then,
\begin{equation*}
\card{ \{ C \subseteq A \mid C \cap B = \{0\}, \dim_{\mathbb{F}_q} C = c\} } = q^{bc} \binom{a-b}{c}_q\;.
\end{equation*}
\end{lemma}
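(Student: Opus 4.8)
The plan is to reduce the count to two well-known enumerations by passing to the quotient $A/B$. Let $\pi \colon A \to A/B$ be the canonical projection, so $\dim_{\mathbb{F}_q}(A/B) = a - b$. If a subspace $C \subseteq A$ satisfies $C \cap B = \{0\}$, then $\pi$ restricts to an isomorphism $C \xrightarrow{\sim} \pi(C)$, hence $\pi(C)$ is a $c$-dimensional subspace of $A/B$. Conversely I would count, for each fixed $c$-dimensional subspace $\bar{C} \subseteq A/B$, the number of $c$-dimensional $C \subseteq A$ with $C \cap B = \{0\}$ and $\pi(C) = \bar{C}$; multiplying that fibre count by the number $\binom{a-b}{c}_q$ of subspaces $\bar{C}$ then yields the claimed formula.

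For the fibre count, note that $\pi^{-1}(\bar{C})$ is a subspace of $A$ of dimension $b + c$ containing $B$, and that a $c$-dimensional $C \subseteq \pi^{-1}(\bar{C})$ satisfies $\pi(C) = \bar{C}$ and $C \cap B = \{0\}$ exactly when $C$ is a vector-space complement of $B$ inside $\pi^{-1}(\bar{C})$. The number of complements of a $b$-dimensional subspace in a $(b+c)$-dimensional $\mathbb{F}_q$-space is $q^{bc}$: fixing one complement $C_0$ and a basis adapted to the decomposition, every complement is the graph $\{\, c_0 + \varphi(c_0) : c_0 \in C_0 \,\}$ of a unique linear map $\varphi \in \Hom_{\mathbb{F}_q}(C_0, B)$, and there are $q^{bc}$ such maps. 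Combining the two counts gives the total $q^{bc}\binom{a-b}{c}_q$.

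I do not expect a genuine obstacle; this is a routine linear-algebra enumeration, and the only step meriting an explicit line is the bijection between complements of $B$ in $\pi^{-1}(\bar C)$ and $\Hom_{\mathbb{F}_q}(C_0, B)$. As an alternative one can avoid the quotient entirely: count ordered linearly independent $c$-tuples $(v_1, \dots, v_c)$ with $\langle v_1, \dots, v_c \rangle \cap B = \{0\}$, of which there are $\prod_{j=0}^{c-1}(q^{a} - q^{b+j})$ — at step $j+1$ one must choose $v_{j+1}$ outside the $(b+j)$-dimensional subspace $B + \langle v_1, \dots, v_j \rangle$ — and then divide by the number $\prod_{j=0}^{c-1}(q^{c} - q^{j})$ of ordered bases of a fixed $c$-dimensional space. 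Extracting the powers of $q$ from numerator and denominator recovers the same closed form $q^{bc}\binom{a-b}{c}_q$.
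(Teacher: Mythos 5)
Your proof is correct, and it is in fact more than the paper supplies: the paper disposes of this lemma with a single citation to Lemma 9.3.2 of the reference on distance-regular graphs, giving no argument of its own. Both of your routes are standard and sound. The quotient argument correctly identifies the set being counted as fibred over the $c$-dimensional subspaces of $A/B$, with each fibre being the set of complements of $B$ inside the $(b+c)$-dimensional preimage, and the graph-of-$\varphi$ bijection with $\Hom_{\mathbb{F}_q}(C_0,B)$ gives the fibre size $q^{bc}$. The alternative count of ordered independent $c$-tuples transversal to $B$ also works; the only point worth making explicit there is that requiring $v_{j+1}\notin B+\langle v_1,\dots,v_j\rangle$ at each step is equivalent to $v_1,\dots,v_c$ being independent with $\langle v_1,\dots,v_c\rangle\cap B=\{0\}$ (equivalently, $\pi(v_1),\dots,\pi(v_c)$ independent in $A/B$), after which dividing by $\prod_{j=0}^{c-1}(q^c-q^j)$ and extracting powers of $q$ gives $q^{bc}\binom{a-b}{c}_q$. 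Either version would serve as a self-contained replacement for the paper's bare citation.
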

\begin{proof}
	See \cite{dist-reg-graphs} (Lemma 9.3.2).	
%	In $M$ fix some basis such that $V$ has all vectors with zeros on the first $n - t$ positions. Then there is a one-to-one correspondence, by calculating the row-space, between row-echelon $k \times n$ matrices that are full-rank on the first $n - t$ columns and such subspaces $U$. The number of such matrices is equal to the product of $\card{\M_{k \times t}(K)}$ and the number of full-rank row-echelon $k \times n - t$ matrices, which is equal to the number of $k$-dimensional subspaces in $n-t$-dimensional space. The final equality is then straightforward. 
\end{proof}

\begin{lemma}\label{lemma:qbin-ineq}
	For any positive integer $t$ there exists an integer $x > t$ such that
\begin{equation*}
\sum_{i=0}^{t-1} \binom{x}{i}_q < q^{t(x-t)}\;.
\end{equation*}
\end{lemma}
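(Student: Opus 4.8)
The plan is to exploit the elementary fact that, for fixed $i$ and $x \to \infty$, the Gaussian binomial coefficient $\binom{x}{i}_q$ grows like $q^{i(x-i)}$. Consequently the left-hand sum is controlled by its top term $\binom{x}{t-1}_q$, of order $q^{(t-1)(x-t+1)}$; viewed in the exponent this is an affine function of $x$ with slope $t-1$, whereas the right-hand side $q^{t(x-t)}$ has slope $t$. So for all sufficiently large $x$ the right-hand side dominates, and the task is only to make ``sufficiently large'' explicit.

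First I would record a crude upper bound for Gaussian binomials. Writing $\binom{x}{i}_q = \prod_{j=0}^{i-1}\frac{q^{x-j}-1}{q^{i-j}-1}$ and checking, by cross-multiplication, that for $0 \le j \le i-1$ and $q \ge 2$ each factor obeys $\frac{q^{x-j}-1}{q^{i-j}-1} \le \frac{q}{q-1}\,q^{x-i}$, one gets
\begin{equation*}
\binom{x}{i}_q \;\le\; \Bigl(\frac{q}{q-1}\Bigr)^{i} q^{i(x-i)} \;\le\; 2^{i}\, q^{i(x-i)},
\end{equation*}
the last inequality using $q \ge 2$. Now for every $i$ with $0 \le i \le t-1$ we have $i(x-i) \le ix \le (t-1)x$, so summing the bound yields
\begin{equation*}
\sum_{i=0}^{t-1}\binom{x}{i}_q \;\le\; \Bigl(\sum_{i=0}^{t-1}2^{i}\Bigr) q^{(t-1)x} \;<\; 2^{t}\, q^{(t-1)x}.
\end{equation*}

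It then remains to compare $2^{t}q^{(t-1)x}$ with $q^{t(x-t)} = q^{tx-t^{2}}$: the required inequality holds as soon as $2^{t} < q^{(tx-t^{2})-(t-1)x} = q^{\,x-t^{2}}$, that is, as soon as $x > t^{2} + t\log_{q}2$. Since $q \ge 2$ forces $\log_{q}2 \le 1$, the choice $x = t^{2}+t+1$ is an integer larger than $t$ that does the job, which proves the lemma. I do not expect any real obstacle here: the only point that needs a little care is the bound on $\binom{x}{i}_q$, where one must ensure that the multiplicative constant it introduces ($2^{i}\le 2^{t}$, hence depending only on the fixed parameter $t$) is harmless against the genuine linear-in-$x$ gap between the exponents $t(x-t)$ and $(t-1)x$. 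A sharper analysis, keeping only the dominant term $i=t-1$ via monotonicity of $i\mapsto i(x-i)$ on $\{0,\dots,t-1\}$, would already succeed for $x$ of size roughly $2t$, but this refinement is not needed for the existence statement.
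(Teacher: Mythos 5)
Your proof is correct and follows essentially the same approach as the paper's: both bound the left-hand sum by an expression of the form $C(t)\,q^{(t-1)x}$ and then exploit the slope gap between $(t-1)x$ and $t(x-t)$ in the exponent. The only cosmetic difference is that the paper first bounds the sum by $t\binom{x}{t-1}_q$ (implicitly using unimodality of the Gaussian binomials, hence the restriction $x\geq 2t$) and absorbs the denominator of $\binom{x}{t-1}_q$ into an unspecified constant $c(t)$, whereas you bound each $\binom{x}{i}_q$ term-by-term via $\binom{x}{i}_q \le \bigl(\tfrac{q}{q-1}\bigr)^{i} q^{i(x-i)} \le 2^{i}q^{(t-1)x}$, which yields the explicit constant $2^{t}$ and the concrete choice $x=t^{2}+t+1$; this makes your version slightly cleaner and fully explicit, but the underlying idea is the same.
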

\begin{proof}
If $t = 1$ the inequality holds for any $x > 1$. Let $t > 2$ and consider only $x \geq 2t$. Then,
\begin{align*}
\sum_{i=0}^{t-1} \binom{x}{i}_q &< t \binom{x}{t-1}_q = t \frac{(q^x - 1) \dots (q^{x - t + 2} - 1)}{(q^{t-1} - 1) \dots (q - 1)}
\\&< c(t) q^x q^{x-1}\dots q^{x-t + 2}
= c(t)q^{\frac{(t-1)(2x - t + 2)}{2}} = q^{(t-1)x + c'(t)} \;,
\end{align*}
for some constants $c(t), c'(t)$ that depend on $t$. The inequality $q^{(t-1)x + c'(t)} \leq q^{t(x-t)}$ holds for any $x \geq t^2 + c'(t)$. Thus we can take $x$ large enough to be greater than $2t$ and to satisfy the inequality.
\end{proof}

\begin{lemma}\label{lemma:nonzero-kernel}
	There exists $m$ such that $\Ker E' \neq \{0\}$.
\end{lemma}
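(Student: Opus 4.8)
The plan is to prove $\Ker E' \neq \{0\}$ by a dimension count. The map $E'$ is a $\Q$-linear map between the vector spaces $\F(S_{=\ell},\Q)$ and $\F(S_{<\ell},\Q)$, of $\Q$-dimensions $\card{S_{=\ell}}$ and $\card{S_{<\ell}}$ respectively, so it suffices to exhibit an $m$ for which $\card{S_{=\ell}} > \card{S_{<\ell}}$; then $E'$ cannot be injective, indeed rank--nullity gives $\dim_\Q \Ker E' \geq \card{S_{=\ell}} - \card{S_{<\ell}} > 0$.

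To count the two sets I would pass from submodules to subspaces. By the classification of $R$-modules recalled above (cf.\ \cite{lang}), the poset $(\L(M),\subseteq)$ is isomorphic, by a dimension-preserving and intersection-preserving lattice isomorphism, to the lattice of $\mathbb{F}_q$-subspaces of an $m$-dimensional vector space, under which $X$ corresponds to an $(m-\ell)$-dimensional subspace. Applying \Cref{lemma:number-nonint} with $a = m$, $b = m-\ell$ and $c = \ell$ gives $\card{S_{=\ell}} = q^{\ell(m-\ell)}\binom{\ell}{\ell}_q = q^{\ell(m-\ell)}$. For each $c < \ell$, the number of $c$-dimensional submodules of $M$ meeting $X$ trivially is at most the total number $\binom{m}{c}_q$ of $c$-dimensional subspaces of an $m$-dimensional space, hence
\begin{equation*}
\card{S_{<\ell}} \leq \sum_{c=0}^{\ell-1}\binom{m}{c}_q\;.
\end{equation*}

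Finally I would fix $\ell = k+1$, so that $\ell > k$ as required, and apply \Cref{lemma:qbin-ineq} with $t = \ell$: there is an integer $m > \ell$ with $\sum_{c=0}^{\ell-1}\binom{m}{c}_q < q^{\ell(m-\ell)}$. For this $m$ we obtain $\card{S_{<\ell}} \leq \sum_{c=0}^{\ell-1}\binom{m}{c}_q < q^{\ell(m-\ell)} = \card{S_{=\ell}}$, and $\Ker E' \neq \{0\}$ follows from the dimension count above.

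The argument is short once the two preparatory lemmas are available; the only point requiring care is the identification of $(\L(M),\subseteq)$ with a subspace lattice together with the observation that it respects both dimension and intersection, so that \Cref{lemma:number-nonint} applies to submodules of $M$. I do not anticipate a genuine obstacle beyond this bookkeeping.
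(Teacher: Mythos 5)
Your proof matches the paper's step for step: the rank--nullity dimension count, the lattice isomorphism between submodules of $M$ and subspaces of an $m$-dimensional $\mathbb{F}_q$-space (the paper cites \cite{yaraneri}) that makes \Cref{lemma:number-nonint} applicable, the exact count $\card{S_{=\ell}} = q^{\ell(m-\ell)}$, the bound $\card{S_{<\ell}} \leq \sum_{i=0}^{\ell-1}\binom{m}{i}_q$, and the invocation of \Cref{lemma:qbin-ineq} with $t = \ell$. The one slip is fixing $\ell = k+1$: in the setup preceding the lemma, $\ell$ is an arbitrary but already given integer with $m \geq \ell > k$ (in \Cref{thm:swc_g} it is the dimension of the alphabet $A$), so you must take $t = \ell$ for that given $\ell$ rather than choosing it, though your argument works unchanged once the restriction is dropped.
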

\begin{proof}
For any positive integer $m$ there exists an isomorphism between the poset of subspaces of an $m$-dimensional vector spaces and the poset of submodules of an $m$-dimensional $R$-module, see \cite{yaraneri}. Therefore, we can use \Cref{lemma:number-nonint} for $R$-modules. 

Calculate the cardinalities of sets. From \Cref{lemma:number-nonint},
\begin{equation*}
\card{S_{=\ell}} = q^{l(m-l)} \binom{l}{l}_q = q^{l(m-l)}\;.
\end{equation*}
Since $S_{<\ell} \subseteq \{ U \in \L(M) \mid \dim U < \ell \}$,
\begin{equation*}
\card{S_{<\ell}} \leq \sum_{i=0}^{\ell - 1} \binom{m}{i}_q\;.
\end{equation*}
From \Cref{lemma:qbin-ineq}, there exists $m$ such that $\card{S_{=\ell}} > \card{S_{<\ell}}$. Therefore
$$\dim_{\Q} \Ker E' \geq \dim_{\Q} \F(S_{=\ell}, \Q) - \dim_{\Q} \F(S_{<\ell}, \Q) = \card{S_{=\ell}} - \card{S_{<\ell}} > 0\;.$$
\end{proof}

From now we assume that $\dim M = m$ is such that $\Ker E' \neq \{0\}$. It is possible due to \Cref{lemma:nonzero-kernel}. Let $0 \neq \xi \in \Ker E' \subseteq \F(S_{=\ell}, \Q)$. Define a map,
\begin{equation*}
\eta(V) = \left\lbrace\begin{array}{ll}
\xi(V)& \text{, if } V \in S_{=\ell};\\
0& \text{, otherwise}. 
\end{array}
\right.
\end{equation*}
%Note that $\eta$ is an extension of $\xi$ from the subspace $\F(S_{=\ell}, \Q)$ to the full space $\F(\L(M), \Q)$.

\begin{lemma}\label{lemma:eta-zero}
	The equality $E(\eta) = 0$ holds.
\end{lemma}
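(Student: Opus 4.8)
The plan is to first establish the pointwise identity
\begin{equation*}
\sum_{V \in S_{=\ell}} \xi(V)\,\eta_V = \eta
\end{equation*}
in the vector space $\F(\L(M),\Q)$, and then apply the $\Q$-linear map $E$ to both sides, using \Cref{lemma:eta-v} to annihilate each summand $E(\eta_V)$.

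To prove the identity, I would evaluate the left-hand side, call it $\Xi$, at an arbitrary $U \in \L(M)$. Since $\eta_V(U) = \mu(U,V)$ is zero unless $U \subseteq V$, and an inclusion $U \subseteq V$ together with $V \cap X = \{0\}$ forces $U \cap X = \{0\}$ and $\dim U \le \dim V = \ell$, a submodule $U$ receives a nonzero contribution only when $U \in S_{<\ell} \cup S_{=\ell}$. For $U \in S_{<\ell}$ we have $\Xi(U) = \sum_{V \in S_{=\ell}} \xi(V)\,\eta_V(U) = E'(\xi)(U) = 0$ because $\xi \in \Ker E'$, and also $\eta(U) = 0$ by definition of $\eta$. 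For $U \in S_{=\ell}$, the only $V \in S_{=\ell}$ with $U \subseteq V$ is $V = U$ itself, because a containment of submodules of equal dimension is an equality; hence $\Xi(U) = \xi(U)\,\mu(U,U) = \xi(U) = \eta(U)$. For every remaining $U$ both sides vanish. Thus $\Xi = \eta$.

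Applying the $\Q$-linear map $E$ and using that the sum is finite,
\begin{equation*}
E(\eta) = E(\Xi) = \sum_{V \in S_{=\ell}} \xi(V)\,E(\eta_V)\;.
\end{equation*}
Every $V \in S_{=\ell}$ satisfies $\dim V = \ell > k$, so \Cref{lemma:eta-v} gives $E(\eta_V) = 0$, and therefore $E(\eta) = 0$, as claimed.

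The only delicate point is the bookkeeping in the first step: one must check that the submodules $U$ which can lie below some element of $S_{=\ell}$ are exhausted by $S_{<\ell}$ together with $S_{=\ell}$ itself, so that the vanishing of $E'(\xi)$ indeed controls all but the top-dimensional contributions, and one needs the elementary fact that nested submodules of equal dimension coincide. I do not anticipate a genuine obstacle here; the content of the lemma is essentially that $\eta$ was constructed precisely as the combination $\sum_{V \in S_{=\ell}} \xi(V)\eta_V$, and $E$ kills each $\eta_V$ with $\dim V > k$.
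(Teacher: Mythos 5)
Your proof is correct. The route is a mild reorganization of the paper's: you first establish the exact identity $\eta = \sum_{V \in S_{=\ell}} \xi(V)\,\eta_V$ in $\F(\L(M),\Q)$ (verifying it pointwise over the three cases $U \in S_{=\ell}$, $U \in S_{<\ell}$, and everything else), and then apply $E$ and annihilate each $E(\eta_V)$ via \Cref{lemma:eta-v}. The paper instead works directly on $E(\eta) = \sum_{V \in S_{=\ell}} \xi(V)\,\id_V$, uses $E(\eta_V)=0$ from \Cref{lemma:eta-v} to replace each $\id_V$ with $-\sum_{U \in S_{<\ell}} \eta_V(U)\,\id_U$, and swaps the order of summation to produce $-\sum_{U \in S_{<\ell}} E'(\xi)(U)\,\id_U = 0$. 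Both proofs rest on exactly the same two facts---$\xi \in \Ker E'$ and $E(\eta_V) = 0$ for $\dim V > k$---so neither is more general than the other. Your version has the small conceptual advantage of isolating $\eta$ as the Möbius-type lift of $\xi$ (which makes the kernel condition's role transparent before $E$ enters), whereas the paper reaches the conclusion with a more compact direct computation. The case analysis you flag as the "delicate point" is handled correctly: if $U \subseteq V$ with $V \cap X = \{0\}$ then $U \cap X = \{0\}$, and nested submodules of equal dimension (equivalently equal cardinality) coincide, so the three cases are exhaustive.
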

\begin{proof}
From \Cref{lemma:eta-v}, for any $V \in \L(M)$ of dimension $\ell > k$,
\begin{equation*}
0 = E(\eta_V) = \sum_{U \in \L(M)} \eta_V(U) \id_U = \id_V + \sum_{U \in S_{<\ell}} \eta_V(U) \id_U\;.
\end{equation*}
Then,
\begin{align*}
E(\eta) =& \sum_{V \in \L(M)} \eta(V) \id_V = \sum_{V \in S_{=\ell}} \xi(V) \id_V = - \sum_{V \in S_{=\ell}}\xi(V) \sum_{U \in S_{<\ell}} \eta_V(U) \id_U
\\=& - \sum_{U \in S_{<\ell}}  \left(\sum_{V \in S_{=\ell}}\xi(V)  \eta_V\right)(U) \id_U = - \sum_{U \in S_{<\ell}}  E'(\xi)(U) \id_U = 0\;.
\end{align*}
\end{proof}
One can see that we can choose $\xi \in \Ker E'$ to have integer values, by multiplying by the proper scalar $\lambda \in \Q$, so we assume $\eta$ also has integer values.

The module of characters $W = \hat{M}$ is a right(left) $R$-module of dimension $m$. The poset $(\L(M),\subseteq)$ is isomorphic to the poset $(\L(W),\subseteq)$.
For any module $V \in \L(W)$ the dual module $V^\perp$ is in $\L(M)$ and vice versa.
Define a dual map $\eta^{\perp} \in \F(\L(W), \Q)$ as follows, for any $V \in \L(W)$,
\begin{equation*}
\eta^{\perp}(V) = \eta(V^\perp)\;.
\end{equation*}
Note that the map $\eta^\perp$ has only integer values.

\begin{lemma}\label{lemma:etaperp-zero}
	The equality $E(\eta^{\perp}) = 0$ holds.
\end{lemma}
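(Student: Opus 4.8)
The plan is to derive this from \Cref{lemma:eta-zero} by applying the Fourier transform. By \Cref{lemma:eta-zero} the function $E(\eta) = \sum_{U \in \L(M)} \eta(U)\id_U \in \F(M,\Q)$ is identically zero, hence so is its Fourier transform $\mathcal{F}(E(\eta)) \in \F(\hat{M},\Q)$. First I would expand this using the $\Q$-linearity of $\mathcal{F}$ together with the formula $\mathcal{F}(\id_U) = \card{U}\,\id_{U^\perp}$, which gives
\begin{equation*}
0 = \mathcal{F}(E(\eta)) = \sum_{U \in \L(M)} \eta(U)\,\card{U}\,\id_{U^\perp}\;.
\end{equation*}

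The key point is that $\eta$ is supported on $S_{=\ell}$, so in every nonzero term one has $\dim U = \ell$, and therefore $\card{U}$ equals a single fixed nonzero constant (namely $q^{k\ell}$) which can be pulled out of the sum. Reindexing along the duality bijection $U \mapsto U^\perp$ between $\L(M)$ and $\L(W)$, which is an involution by $V^{\perp\perp} \cong V$ (so that $U = V^\perp$ when $V = U^\perp$), and invoking the definition $\eta^\perp(V) = \eta(V^\perp)$, the right-hand side becomes $q^{k\ell}\sum_{V \in \L(W)} \eta^\perp(V)\,\id_V = q^{k\ell}\,E(\eta^\perp)$. Since $q^{k\ell} \neq 0$ in $\Q$, this forces $E(\eta^\perp) = 0$.

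I do not anticipate a serious obstacle. The one delicate point is the scalar $\card{U}$ introduced by the Fourier transform: for a general $\eta$ it would break the identification with $E(\eta^\perp)$, but here it is harmless precisely because $\eta$ is concentrated on submodules of a single dimension $\ell$. One should also confirm that $\perp$ genuinely interchanges $\L(M)$ and $\L(W)$ bijectively and squares to the identity, which is exactly the double-duality fact $V^{\perp\perp}\cong V$ recorded in the preliminaries.
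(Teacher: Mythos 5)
Your proof is correct and uses the same key ideas as the paper: the Fourier transform, the formula $\mathcal{F}(\id_U) = \card{U}\id_{U^\perp}$, and the crucial observation that $\eta$ is supported on submodules of a single dimension so that $\card{U}$ is a constant $q^{k\ell}$. The only difference is direction: the paper computes $\mathcal{F}(E(\eta^\perp)) = q^{(m-\ell)k}E(\eta) = 0$ and then invokes invertibility of $\mathcal{F}$, whereas you apply $\mathcal{F}$ to the known vanishing $E(\eta) = 0$ and read off $q^{k\ell}E(\eta^\perp) = 0$ directly, which is marginally more economical since it avoids the final inversion step.
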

\begin{proof}
The Fourier transform is a $\Q$-linear map. Note that since for all $V \in S_{=\ell}$, $\dim V = \ell$, $\card{V} = q^{k\ell}$ and $\card{V^\perp} = q^{(m - \ell) k}$. Calculate,
\begin{align*}
\mathcal{F} (E(\eta^{\perp})) &= \sum_{V \in \L(W)} \eta^{\perp}(V)\mathcal{F}(\id_V) = \sum_{V^{\perp} \in S_{=\ell}} \eta(V^\perp) \card{V}\id_{V^{\perp}} 
\\&= \sum_{U \in S_{=\ell}} \card{U^{\perp}} \eta(U)\id_{U} = q^{(m - \ell) k}\sum_{U \in S_{=\ell}} \eta(U)\id_{U} 
\\&= q^{(m - \ell) k} \sum_{U \in \L(M)} \eta(U)\id_{U} = q^{(m - \ell) k} E(\eta)\;.
\end{align*}
From \Cref{lemma:eta-zero}, $E(\eta) = 0$, so $\mathcal{F} (E(\eta^{\perp})) = 0$. The Fourier transform is invertible, thus $E(\eta^{\perp}) = \mathcal{F}^{-1}(0) = 0$.
\end{proof}

Define $S^{\perp} = \{ V \in \L(W) \mid V^\perp \in S_{=\ell} \}$. For each $V \in S^{\perp}$, by duality, $V + X^\perp = W$. Recall $\dim V = m - \ell$ and $\dim X^\perp = m - (m - \ell) = \ell$, and therefore $V \cap X^{\perp} = \{0\}$. Alternatively, $S^{\perp}$ can be defined as,
$$
S^{\perp} = \{ V \in \L(W) \mid V \cap X^\perp = \{0\}, \dim V = m - \ell \}\;.
$$
Note that
\begin{equation*}
\sum_{V \in S^{\perp}} \eta^{\perp}(V) = \sum_{V \in \L(W)} \eta^{\perp}(V)\id_V(0) = E(\eta^{\perp})(0) = 0\;.
\end{equation*}

\begin{proposition}\label{thm:swc_g}
	Let $R = \M_{k\times k} (\mathbb{F}_q)$ and let $A$ be an $\ell$-dimensional $R$-module, where $\ell > k$. Then there exist a positive integer $n$ and an $R$-linear code $C \subset A^n$ with an unextendable $\swc_{G}$-isometry $f \in \Hom_R(C, A^n)$, for any $G \leq \Aut_R(A)$.
\end{proposition}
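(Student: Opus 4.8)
The plan is to convert the identity $E(\eta^\perp)=0$ of \Cref{lemma:etaperp-zero} into two parametrizations of a single code that jointly witness the failure of extension. First I would turn each $V$ occurring in the support of $\eta^\perp$ into a coordinate map $W\to A$ with kernel $V$, chosen \emph{uniformly} through $X^\perp$: since every $V\in S^\perp$ satisfies $V\oplus X^\perp=W$, the projection $p_V\colon W\to X^\perp$ along $V$ has $\Ker p_V=V$, and since $\dim X^\perp=\ell=\dim A$ the classification of $R$-modules supplies an isomorphism $\Phi\colon X^\perp\to A$, which I fix once and for all. Setting $\lambda_V:=\Phi p_V\in\Hom_R(W,A)$, one has $\Ker\lambda_V=V$ and, for every $a\in A$, $\lambda_V^{-1}(a)=\Phi^{-1}(a)+V$, hence $\id_{\lambda_V^{-1}(a)}(w)=\id_V\!\big(w-\Phi^{-1}(a)\big)$.

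Next I would assemble the code. Write $\eta^\perp=\eta^\perp_+-\eta^\perp_-$ for the (disjointly supported, nonnegative, integer-valued) positive and negative parts, and recall $\sum_{V}\eta^\perp(V)=0$. Put $n:=\sum_{V\in S^\perp}\big(\eta^\perp_+(V)+1\big)=|S^\perp|+\sum_{V\in S^\perp}\eta^\perp_+(V)$, which by $\sum_{V}\eta^\perp(V)=0$ also equals $\sum_{V\in S^\perp}\big(\eta^\perp_-(V)+1\big)$, and define $\lambda,\mu\in\Hom_R(W,A^n)$ by listing, for each $V\in S^\perp$, the coordinate $\lambda_V$ with multiplicity $\eta^\perp_+(V)+1$ in $\lambda$ and with multiplicity $\eta^\perp_-(V)+1$ in $\mu$. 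Because every $V\in S^\perp$ now occurs in both, $\Ker\lambda=\Ker\mu=\bigcap_{V\in S^\perp}V=:N$, so $C:=\lambda(W)\cong W/N$ is a code and $f\colon C\to A^n$, $\lambda(w)\mapsto\mu(w)$, is a well-defined injective $R$-homomorphism; all conditions of \Cref{thm-isometry-criterium} can then be checked on $W$ (passing to $W/N$ if one wants a genuine parametrization, which is harmless since every $\lambda_i^{-1}(O)$ is a union of cosets of $\Ker\lambda_i\supseteq N$).

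To check that $f$ is an $\swc_G$-isometry for \emph{every} $G\leq\Aut_R(A)$, I would use \Cref{thm-isometry-criterium}: it suffices to verify \cref{eq-main-space-equation} for each orbit $O$, and since $\lambda_i^{-1}(O)=\bigsqcup_{a\in O}\lambda_i^{-1}(a)$, it is enough to treat singletons $O=\{a\}$. For such $a$, writing $z:=w-\Phi^{-1}(a)$ and using that $\eta^\perp_\pm$ are supported inside $S^\perp$,
\begin{equation*}
\sum_{i=1}^n\id_{\lambda_i^{-1}(a)}(w)=\sum_{V\in S^\perp}\big(\eta^\perp_+(V)+1\big)\id_V(z)=E(\eta^\perp_+)(z)+\sum_{V\in S^\perp}\id_V(z),
\end{equation*}
and likewise $\sum_i\id_{\mu_i^{-1}(a)}(w)=E(\eta^\perp_-)(z)+\sum_{V\in S^\perp}\id_V(z)$; the difference of the two is $E(\eta^\perp)(z)=0$ by \Cref{lemma:etaperp-zero}. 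For non-extendability, suppose $f$ extended to a $G$-monomial map for some $G$. By \Cref{thm-isometry-criterium} there is $\pi\in S_n$ with $\lambda_i^{-1}(O)=\mu_{\pi(i)}^{-1}(O)$ for all $O$; at $O=\{0\}$ this reads $\Ker\lambda_i=\Ker\mu_{\pi(i)}$, i.e. the complement assigned to coordinate $i$ in $\lambda$ equals the one assigned to $\pi(i)$ in $\mu$. Hence the multiset of submodules of $W$ in which each $V\in S^\perp$ has multiplicity $\eta^\perp_+(V)+1$ coincides with the one in which it has multiplicity $\eta^\perp_-(V)+1$; comparing multiplicities forces $\eta^\perp_+=\eta^\perp_-$, i.e. $\eta^\perp=0$, contradicting $\eta^\perp\neq0$ (which holds because $\xi\neq0$). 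So no extension exists, for any $G$, and the same pair $(C,f)$ serves all $G$ at once.

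The step I expect to be the crux is the isometry computation. With an \emph{arbitrary} choice of identifications $W/V\cong A$ the orbit-wise sums do not simplify, and the whole argument rests on routing every quotient through the common module $X^\perp$ via the projections $p_V$ together with the single fixed isomorphism $\Phi$; this is exactly what makes $\sum_i\id_{\lambda_i^{-1}(a)}-\sum_i\id_{\mu_i^{-1}(a)}$ collapse to the translate $E(\eta^\perp)(\,\cdot-\Phi^{-1}(a))$ of the vanishing function of \Cref{lemma:etaperp-zero}. A minor technical point, handled by the symmetric ``$+1$'' padding, is to make $\Ker\lambda=\Ker\mu$ so that $f$ is a bona fide code isometry rather than merely a partial relation, while the same padding leaves both the isometry identity and the kernel-multiset argument untouched.
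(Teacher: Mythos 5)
Your proposal is correct and follows essentially the same route as the paper: coordinates are indexed by submodules $V\in S^\perp$, parametrized by the projection onto $X^\perp$ along $V$ composed with a fixed isomorphism $X^\perp\to A$; the $\swc_G$-isometry identity collapses to a translate of $E(\eta^\perp)=0$; and non-extendability is read off from the kernel multisets at the zero orbit. The paper avoids your ``$+1$'' padding by placing only the positively-weighted $V$'s in $\lambda$ and only the negatively-weighted ones in $\mu$ (so that no $\Ker\lambda_i$ equals any $\Ker\mu_j$, making non-extendability immediate), and instead derives $\Ker\lambda=\Ker\mu$ as a consequence of the isometry identity evaluated at the orbit $\{0\}$.
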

\begin{proof}
Use the notation of this section.
Since $\dim X^{\perp} = \dim A$ there is a module isomorphism $\psi: X^{\perp} \rightarrow A$. Define the length of the code $n$ as the sum of the positive values $\eta^\perp(U)$, $U \in S^{\perp}$. From the calculations above, it is equal to the sum of the negative values $\eta^\perp(U)$, $U \in S^{\perp}$.

For any $i \in \range{n}$, let us define $\lambda_i \in \Hom_R(W,A)$. Choose a submodule $V_i \subseteq W$, such that $\eta^{\perp}(V_i) > 0$, (do it for the submodule $V = V_i$ exactly $\eta^{\perp}(V)$ times). Since $V_i \in S^{\perp}$, $V_i \cap X^\perp = \{0\}$, $\dim V_i = m - \ell$ and $\dim X^\perp = \ell$, and therefore $W = V_i \oplus X^{\perp}$.
Define 
$$\lambda_i: W = V_i \oplus X^{\perp} \rightarrow A, \quad (v, x) \mapsto \psi(x)\;.$$
Then $\Ker \lambda_i = V_i \subseteq W$ and for any $a \in A$, $\lambda_i^{-1}(a) = \psi^{-1}(a) + V_i$. In the same way, for any $i \in \range{n}$, define maps $\mu_i: W \rightarrow A$ for the modules $U_i$, such that $\eta^{\perp}(U_i) < 0$.

Check \cref{eq-main-space-equation} is satisfied for the trivial group $\{e\} < \Aut_R(A)$. The trivial group has one-point orbits in $A$. Calculate, using \Cref{lemma:etaperp-zero}, for any $a \in A$, for any $w \in W$,
\begin{align*}
\sum_{i = 1}^n (\id_{\lambda_i^{-1}(\{a\})} - \id_{\mu_i^{-1}(\{a\})})(w) &= \sum_{U \in S^{\perp}} \eta^{\perp}(U) \id_{\psi^{-1}(a) + U}(w)
\\= \sum_{U \in S^{\perp}} \eta^{\perp}(U) \id_{U}(w - \psi^{-1}(a)) &= E(\eta^\perp)(w - \psi^{-1}(a)) = 0\;.
\end{align*}
Also, it is easy to see that for any $i,j \in \range{n}$, $\Ker \lambda_i \neq \Ker \mu_j$.

Since \cref{eq-main-space-equation} holds for the orbit $\{0\}$, it is true that $\Ker \lambda = \bigcap_{i=1}^n \Ker \lambda_i = \bigcap_{i=1}^n \Ker \mu_i = \Ker \mu = N \subset W$. Let $\bar{\lambda}, \bar{\mu}$ be two canonical injective maps $\bar{\lambda}, \bar{\mu} \in \Hom_R(W/N, A^n)$ such that $\bar{\lambda}(\bar{w}) = \lambda(w)$ and $\bar{\mu}(\bar{w}) = \mu(w)$ for all $w \in W$, where $\bar{w} = w + N$.

Use the notation of \Cref{sec:ext-criterium}. Define a code $C \subset A^n$ as the image $\lambda(W)$. Define a map $f \in \Hom_R(C,A^n)$ as $f = \bar{\mu}\bar{\lambda}^{-1}$. It is true that $f\lambda = \mu$.
From \Cref{thm-isometry-criterium}, $f$ is an $\swc_{\{e\}}$-isometry. Therefore, $f$ is an $\swc_{G}$-isometry. However, $f$ does not extend even to an $\Aut_R(A)$-monomial map, since \cref{eq-condition} does not hold for the orbit $\{0\}$.
\end{proof}
\section{Main result}
Let $R$ be a finite ring with identity. Recall several results that appear in \cite{wood-foundations} in order to generalize \Cref{thm:swc_g} for the case of arbitrary module alphabet. For the finite ring $R$ there is an isomorphism,
\begin{equation*}
R/\rad(R) \cong \M_{r_1 \times r_1}(\mathbb{F}_{q_1}) \oplus \dots \oplus \M_{r_n \times r_n}(\mathbb{F}_{q_n})\;,
\end{equation*}
for nonnegative integers $n, r_1, \dots, r_n$ and prime powers $q_1, \dots, q_n$, see \cite{wood-foundations} and \cite{lam} (Theorem 3.5 and Theorem 13.1).

Consider a finite left $R$-module $A$. Since $\soc(A)$ is a sum of simple $R$-modules, there exist nonnegative integers $s_1, \dots, s_n$ such that,
$$ \soc(A) \cong s_1 T_1 \oplus \dots \oplus s_n T_n\;,$$
where $T_i \cong \M_{r_i \times 1}(\mathbb{F}_{q_i})$ is a simple $\M_{r_i \times r_i}(\mathbb{F}_{q_i})$-module, $i \in \range{n}$.

\begin{proposition}[see \cite{wood-foundations} (Proposition 5.2)]\label{thm:noncyclic-socle-property}
	The socle $\soc(A)$ is cyclic if and only if $s_i \leq r_i$, for all $i \in \range{n}$.
\end{proposition}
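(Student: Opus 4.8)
The plan is to reduce the statement to a computation over the semisimple quotient $\bar{R} = R/\rad(R)$ and then to treat each matrix block separately.

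First I would observe that $\rad(R)$ annihilates every semisimple $R$-module, so $\rad(R)\soc(A) = 0$ and $\soc(A)$ is naturally a module over $\bar{R} \cong \M_{r_1\times r_1}(\mathbb{F}_{q_1}) \oplus \dots \oplus \M_{r_n \times r_n}(\mathbb{F}_{q_n})$; moreover $\soc(A)$ is cyclic as an $R$-module if and only if it is cyclic as an $\bar{R}$-module, since the lattices of $R$- and $\bar{R}$-submodules coincide. Writing $1 = e_1 + \dots + e_n$ for the decomposition of the identity of $\bar{R}$ into the central idempotents supported on the blocks, every $\bar{R}$-module $N$ splits as $N = \bigoplus_i e_i N$ with $e_i N$ the $T_i$-isotypic component; for $N = \soc(A)$ this identifies $e_i\soc(A) \cong s_iT_i$. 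For $a = \sum_i e_i a \in \soc(A)$ one has $\bar{R}a = \bigoplus_i \M_{r_i\times r_i}(\mathbb{F}_{q_i})(e_i a)$, so $\soc(A)$ is cyclic over $\bar{R}$ if and only if each $s_iT_i$ is cyclic over $\M_{r_i\times r_i}(\mathbb{F}_{q_i})$.

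Next I would record the block computation. Over $S = \M_{r\times r}(\mathbb{F}_q)$ the simple module is $T = \M_{r\times 1}(\mathbb{F}_q)$, and the regular module ${}_S S$ decomposes column by column as $rT$. Since $S$ is simple Artinian, every $S$-module is semisimple, hence isomorphic to $jT$ for a unique $j \geq 0$; and $jT$ is cyclic --- i.e. a homomorphic image of ${}_S S = rT$ --- if and only if $j \leq r$, because a quotient of a length-$r$ semisimple module has length at most $r$, while conversely one may project $rT \twoheadrightarrow jT$ for any $j \leq r$. Applying this with $j = s_i$ and $r = r_i$ shows that $s_iT_i$ is cyclic over $\M_{r_i\times r_i}(\mathbb{F}_{q_i})$ precisely when $s_i \leq r_i$. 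Combining with the previous paragraph, $\soc(A)$ is cyclic if and only if $s_i \leq r_i$ for all $i \in \range{n}$.

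The only place where a little care is needed is the bookkeeping in the reduction step --- verifying that cyclicity is insensitive to passing from $R$ to $\bar{R}$ and that a generator of a cyclic $\bar{R}$-module may be taken blockwise --- but I expect no genuine obstacle here, the argument being essentially the structure theory of semisimple modules over finite rings.
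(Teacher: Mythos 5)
Your proof is correct. The paper does not supply its own argument for this proposition---it is quoted from Wood's \emph{Foundations} paper (Proposition 5.2)---and your reduction is essentially the standard one used there: pass to $\bar R = R/\rad(R)$ (noting $\rad(R)$ kills semisimple modules, so $Ra = \bar Ra$ for $a \in \soc(A)$), split along the central idempotents of the Wedderburn decomposition so that a generator can be chosen blockwise, and then over a single matrix block $S = \M_{r\times r}(\mathbb{F}_q)$ observe that ${}_SS \cong rT$, hence $jT$ is a quotient of ${}_SS$ precisely when $j \leq r$. No gaps; the blockwise-generator step you flagged as needing care is handled correctly by the central-idempotent computation $\bar Ra = \bigoplus_i S_i(e_ia)$.
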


In \cite{elgarem-megahed-wood} (Theorem 3) the authors proved the following.
\begin{theorem}\label{thm:elgarem-wood}
	Let $R$ be a finite ring with identity. Let $A$ be a finite $R$-module with a cyclic socle. The alphabet $A$ has an extension property with respect to the symmetrized weight composition $\swc_G$, built on any subgroup $G \leq \Aut_R(A)$.
\end{theorem}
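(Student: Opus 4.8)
The plan is to derive the statement from the extension criterium, \Cref{thm-isometry-criterium}. Fix a code $C\subseteq A^n$ and an $\swc_G$-isometry $f\in\Hom_R(C,A^n)$, and use the parametrization of \Cref{sec:ext-criterium}: choose $\lambda=(\lambda_1,\dots,\lambda_n)\colon W\to A^n$ with image $C$ and put $\mu=f\lambda=(\mu_1,\dots,\mu_n)$. By \Cref{thm-isometry-criterium}, $\sum_{i=1}^n\id_{\lambda_i^{-1}(O)}=\sum_{i=1}^n\id_{\mu_i^{-1}(O)}$ for every $O\in A/G$, and it suffices to check (i) that \cref{eq-condition} holds for some $\pi\in S_n$ and (ii) that $A$ is $G$-pseudo-injective.

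For (ii) I would prove, as a lemma, that a finite module with cyclic socle is $G$-pseudo-injective for every $G\le\Aut_R(A)$. The input is the standard fact that a finite left module has a cyclic socle precisely when it embeds into the character module $\hat R=\Hom_{\mathbb{Z}}(R,\mathbb{C}^*)$, so that its injective hull is a direct summand of $\hat R$; transporting an orbit-preserving injection $h\colon B\to A$ into $\hat R$, where submodules, endomorphisms and the action of $\bar G$ are all governed by multiplication in $R$, one extends $h$ to an element of $\bar G$. This is module theory of the type already present in \cite{wood-foundations}, and I expect it to be the soft part of the argument.

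Part (i) is the heart. Decompose each preimage along orbit elements, $\id_{\lambda_i^{-1}(O)}=\sum_{a\in O}\id_{\lambda_i^{-1}(a)}$, noting that each nonempty $\lambda_i^{-1}(a)$ is a coset of $\Ker\lambda_i$. Applying the Fourier transform to the isometry identity turns it into relations among such cosets; taking suitable orbit-averaged combinations yields, for every $U\in\L(W)$, a single equation comparing the orbit-decorated multiplicities of $U$ inside the family $\{\Ker\lambda_i\}$ and inside $\{\Ker\mu_i\}$. The crucial point is that, when $\soc(A)$ is cyclic, the $\Q$-linear map assembling these multiplicities is injective — this is exactly the converse of the mechanism behind \Cref{lemma:eta-v} through \Cref{lemma:nonzero-kernel}, where a nonzero kernel was produced precisely because $A$ failed to have a cyclic socle (in the matrix-ring model, because $\dim A>k$). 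A M\"obius inversion on $(\L(W),\subseteq)$ together with the recovered orbit data then forces the multiset $\{\lambda_i\}$ to equal $\{\mu_i\}$ after post-composition with elements of $\bar G$, which is \cref{eq-condition}. Combining (i), (ii) and \Cref{thm-isometry-criterium} shows that $f$ extends to a $G$-monomial map.

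The main obstacle is the injectivity claim in (i), together with the bookkeeping surrounding it: only orbit-level information survives the averaging, so one must verify that exactly the ambiguity absorbed by $G$-monomial maps is lost and nothing more, so that the $\lambda_i$ really can be matched with the $\mu_i$ up to $\bar G$. Pinning down the orbit-averaged Fourier identities and proving that the cyclic-socle hypothesis is precisely the condition for the associated multiplicity map to have trivial kernel is where the real work lies.
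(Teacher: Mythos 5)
The paper does not prove this theorem at all: it cites it verbatim as Theorem~3 of \cite{elgarem-megahed-wood}, and the corollary that follows simply invokes it. So there is no in-paper proof to compare yours against. What you have written is a plan in the same circle of ideas as \cite{elgarem-megahed-wood} and as Wood's Hamming-weight extension theorem, but the two decisive steps are not actually supplied, and one of them is mischaracterized as easy.

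Your part (ii) is not a soft lemma. Embedding $A$ into $\hat R$ gives ordinary pseudo-injectivity: an injective $h\colon B\to A$ extends to an $R$-endomorphism (and, by finiteness, to an automorphism) of $A$. But $G$-pseudo-injectivity demands that the extension lie in $\bar G$, i.e.\ preserve \emph{every} $G$-orbit of $A$, while the hypothesis on $h$ controls only orbits met by $B$. There is no reason an arbitrary pseudo-injective extension should do this, and by the paper's second proposition $G$-pseudo-injectivity is \emph{equivalent} to the $n=1$ case of the very theorem you are proving, so it carries genuine content rather than being the ``soft part.'' Your part (i) is also unproved: \Cref{lemma:eta-v}--\Cref{lemma:nonzero-kernel} live entirely in the matrix-ring setting $R=\M_{k\times k}(\mathbb F_q)$ and point the opposite way — they \emph{construct} a nonzero kernel element for the multiplicity map when $\dim A>k$. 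The positive statement you need (triviality of the kernel of the orbit-averaged multiplicity map whenever $\soc(A)$ is cyclic, over an arbitrary finite ring) is precisely the hard injectivity theorem at the core of \cite{wood-foundations} and \cite{elgarem-megahed-wood}; calling it ``the converse of the mechanism'' is not a proof. Finally, even granting that injectivity, passing from equality of orbit-decorated multiplicities to a permutation $\pi$ satisfying \cref{eq-condition} is a nontrivial matching step that you leave implicit. As it stands the proposal is an accurate roadmap, not a proof.
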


We prove the complementary part.

\begin{theorem}\label{thm:noncyclic-socle-swc}
	Let $R$ be a finite ring with identity. Let $A$ be a finite $R$-module with a noncyclic socle. The alphabet $A$ does not have an extension property with respect to the symmetrized weight composition $\swc_G$, built on any subgroup $G \leq \Aut_R(A)$.
\end{theorem}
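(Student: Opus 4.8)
The plan is to reduce the statement to the matrix-ring case already settled in \Cref{thm:swc_g} by passing to a suitable characteristic submodule of $A$. Since $\soc(A)$ is noncyclic, \Cref{thm:noncyclic-socle-property} gives an index $i$ with $s_i > r_i$. Put $R_i = \M_{r_i \times r_i}(\mathbb{F}_{q_i})$ and let $A_i \subseteq \soc(A)$ be the $T_i$-isotypic component, i.e.\ the sum of all submodules of $A$ isomorphic to $T_i$, so that $A_i \cong s_i T_i$. I would first record the two structural facts that drive the argument. First, $A_i$ is a \emph{characteristic} submodule of $A$: every $g \in \Aut_R(A)$ maps $\soc(A)$ to itself and permutes its isotypic components, hence fixes each one setwise because the $T_j$ are pairwise non-isomorphic. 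Second, $\rad(R)$ annihilates $\soc(A)$ and, for $j \neq i$, the block $\M_{r_j \times r_j}(\mathbb{F}_{q_j})$ annihilates $A_i$; therefore the $R$-action on $A_i$ factors through $R_i$, under which $A_i$ is an $R_i$-module of dimension $s_i$. Consequently $R$-submodules of $A_i^n$ are exactly the $R_i$-submodules, the corresponding $\Hom$-sets coincide, and $\Aut_R(A_i) = \Aut_{R_i}(A_i)$.

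Next I would apply \Cref{thm:swc_g} with ring $R_i$, alphabet $A_i$ (whose dimension $s_i$ exceeds $r_i$), and the subgroup $G_i \le \Aut_{R_i}(A_i)$ defined as the image of $G$ under the restriction homomorphism $\Aut_R(A) \to \Aut_R(A_i) = \Aut_{R_i}(A_i)$, which is well defined by the characteristic property. This produces a length $n$, a code $C' \subset A_i^n$, and an $\swc_{G_i}$-isometry $f' \in \Hom_{R_i}(C', A_i^n)$ that does not extend to an $\Aut_{R_i}(A_i)$-monomial map. By the second structural fact, $C'$ is also an $R$-submodule of $A_i^n \subseteq A^n$ and $f'$ is an $R$-homomorphism $C' \to A^n$, so $(C', f')$ is a candidate counterexample over the alphabet $A$.

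It then remains to verify that $f'$, viewed as a map into $A^n$, is an unextendable $\swc_G$-isometry over $A$. For the isometry part: by the characteristic property, every $G$-orbit that meets $A_i$ is contained in $A_i$, and the $G$-orbits lying in $A_i$ are precisely the $G_i$-orbits of $A_i$; since all coordinates of each codeword of $C'$ and of its image lie in $A_i$, the identity $\swc_G(x) = \swc_G(f'(x))$ for $x \in C'$ follows, orbit by orbit, from $\swc_{G_i}(x) = \swc_{G_i}(f'(x))$. For unextendability: if $f'$ extended to a $G$-monomial map $h$ on $A^n$ with permutation $\pi$ and automorphisms $g_1, \dots, g_n \in \bar{G}$, then each $g_j$ preserves $A_i$, so $h$ restricts to a map $A_i^n \to A_i^n$; moreover each $g_j|_{A_i}$ fixes every $G_i$-orbit of $A_i$ setwise (these orbits being $G$-orbits contained in $A_i$, which $g_j \in \bar{G}$ fixes), hence $g_j|_{A_i} \in \bar{G_i}$ and $h|_{A_i^n}$ is a $G_i$-monomial map extending $f'$ --- contradicting the choice of $f'$, since a $\bar{G_i}$-monomial map is in particular an $\Aut_{R_i}(A_i)$-monomial map. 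Thus $A$ has no extension property with respect to $\swc_G$, and since $G \le \Aut_R(A)$ was arbitrary, the theorem follows.

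I expect the main obstacle to be the bookkeeping about orbits and closures when transferring between $A$ and its characteristic submodule $A_i$: one must check that a $G$-orbit meeting $A_i$ stays inside $A_i$, that such orbits are exactly the $G_i$-orbits, and --- the slightly delicate point --- that the restriction to $A_i$ of an element of $\bar{G}$ lies in $\bar{G_i}$ and not merely in $\Aut_{R_i}(A_i)$; this is precisely what lets an assumed $G$-monomial extension descend to a genuine $G_i$-monomial (equivalently $\bar{G_i}$-monomial) extension and contradict \Cref{thm:swc_g}. The module-theoretic identifications in the second structural fact are routine but need to be stated, because \Cref{thm:swc_g} is phrased over a matrix ring and we must recognize $A_i$, with its ambient $R$-structure, as such an alphabet of the correct dimension.
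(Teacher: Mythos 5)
Your proof is correct, but the unextendability argument is genuinely different from the paper's, and the comparison is worth spelling out. Both you and the paper begin identically: pick $i$ with $s_i > r_i$, identify $A_i = s_iT_i$ as an $R_i = \M_{r_i\times r_i}(\mathbb{F}_{q_i})$-module of dimension $s_i > r_i$, invoke \Cref{thm:swc_g} to produce a code and unextendable isometry over $(R_i, A_i)$, and lift everything to $R$-structures via the projection $R\to R/\rad(R)\to R_i$. The divergence is in how unextendability over $A$ is proved. The paper re-enters the construction of \Cref{thm:swc_g}: it passes to the proper subcode $C' = \lambda(\Ker\lambda_1) \subsetneq C$, observes that $C'$ has a zero column while $f(C')$ has none (because all the kernels $\Ker\lambda_i$, $\Ker\mu_j$ from Section~\ref{sec:mm} are pairwise distinct and of the same dimension), and concludes that no monomial map over $A^n$ whatsoever can send $C'$ to $f(C')$, since monomial maps preserve the multiset of zero columns. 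Your argument instead keeps the whole code from \Cref{thm:swc_g} and uses the structural fact that $A_i$, being an isotypic component of $\soc(A)$, is characteristic under $\Aut_R(A)$: any $G$-monomial extension $h$ of $f'$ on $A^n$ restricts to an $\Aut_{R_i}(A_i)$-monomial (indeed $\bar{G_i}$-monomial) map on $A_i^n$, contradicting \Cref{thm:swc_g} directly. Your route is cleaner in that it treats \Cref{thm:swc_g} as a black box and avoids the subcode, at the cost of needing the isotypic-component bookkeeping; the paper's route (borrowed from Wood's Theorem 6.4) is computationally concrete and needs no facts about automorphism restriction, but it reopens the internals of Section~\ref{sec:mm}. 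One small simplification you could make: the paper just uses $G_i = \{e\}$, which makes the $\swc_G$-isometry part immediate (an $\swc_{\{e\}}$-isometry is an $\swc_G$-isometry for every $G$), so your careful orbit-matching between $A/G$ and $A_i/G_i$ is correct but unnecessary.
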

\begin{proof}
	Our proof repeats the idea of \cite{wood-foundations} (Theorem 6.4).
	From \Cref{thm:noncyclic-socle-property}, $\soc(A)$ is not cyclic, then there exists an index $i$ with $s_i > r_i$. Of course, $s_i T_i \subseteq \soc(A) \subseteq A$. Recall that $s_i T_i$ is the pullback to $R$ of the $\M_{r_i \times r_i}(\mathbb{F}_q)$-module $B = \M_{r_i \times s_i}(\mathbb{F}_q)$. Denote the ring $\M_{r_i \times r_i}(\mathbb{F}_q) = R'$.
	
	Because $r_i < s_i$, \Cref{thm:swc_g} implies the existence of $R'$-linear code $C \subset B^n$ and an $\swc_{\{e\}}$-isometry $f \in \Hom_{R'} (C, B^n)$ that does not extend to a $\Aut_{R'}(B)$-monomial map.
	
	Recall the notation of \Cref{sec:ext-criterium}. Denote $V = \Ker \lambda_1$. Define a subcode $C' = \lambda(V) \subseteq \lambda(W) = C$. The first column of $C'$ is a zero-column. Assume that the code $f(C')$ has a zero-column. Then there exists $i \in \range{n}$ such that $V \subseteq \Ker \mu_i$. The code $C$ is constructed from the map $\eta^\perp$, defined in \Cref{sec:mm}, so $\dim V = \dim A = \dim \Ker \mu_i$ for all $i \in \range{n}$. Also, $\Ker \lambda_i \neq \Ker \mu_j$ for all $i,j \in \range{n}$. Therefore it is impossible to have $V \subseteq \Ker \mu_i$ for some $i \in \range{n}$ and thus $f(C')$ does not have a zero-column.
	
	The projection mappings $R \rightarrow R/\rad(R) \rightarrow R'$ allows us to consider $C'$ and $f$ as an $R$-module and an $R$-linear homomorphism correspondingly.
	
	We have $C' \subset (s_i T_i)^n \subseteq \soc(A)^n \subset A^n$ as $R$-modules. The map $f$ is thus an $\swc_{\{e\}}$-isometry of an $R$-linear code over $A$. Since $\{e\} \leq G$, obviously, $f$ is an $\swc_{G}$-isometry. The codes $C'$ and $f(C')$ have different number of zero columns and hence $f$ does not extend to an $\Aut_R(A)$-monomial map.
\end{proof}

\begin{corollary}\label{thm:cor1}
	Let $R$ be a finite ring with identity. Let $A$ be a finite $R$-module. Let $G$ be a subgroup of $\Aut_R(A)$. The alphabet $A$ has an extension property with respect to $\swc_G$ if and only if $\soc(A)$ is cyclic.
\end{corollary}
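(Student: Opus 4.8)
The plan is to obtain \Cref{thm:cor1} as an immediate consequence of the two preceding theorems, which between them cover the two implications and are both already phrased for an arbitrary subgroup $G \leq \Aut_R(A)$. So essentially no new argument is needed, only an assembly of what has been proved.

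First I would dispatch the "if" direction. Assume $\soc(A)$ is cyclic. Then \Cref{thm:elgarem-wood} applies directly: for every subgroup $G \leq \Aut_R(A)$, the alphabet $A$ has the extension property with respect to $\swc_G$. Nothing further is required here, since that theorem is already stated for arbitrary $G$.

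For the "only if" direction I would argue by contraposition. Suppose $\soc(A)$ is not cyclic. By \Cref{thm:noncyclic-socle-swc}, for every subgroup $G \leq \Aut_R(A)$ there is a length $n$, a code $C \subseteq A^n$, and an $\swc_G$-isometry $f \in \Hom_R(C, A^n)$ that does not extend to an $\Aut_R(A)$-monomial map. A $G$-monomial map is, by definition, built from automorphisms in $\bar G \subseteq \Aut_R(A)$, hence is in particular an $\Aut_R(A)$-monomial map; so $f$ does not extend to a $G$-monomial map either. Thus $A$ fails the extension property with respect to $\swc_G$. Putting the two directions together gives the stated equivalence.

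There is no real obstacle at this stage: the genuine content lives in \Cref{thm:swc_g} and its promotion to \Cref{thm:noncyclic-socle-swc} (via the socle decomposition and \Cref{thm:noncyclic-socle-property}), and in the cited \Cref{thm:elgarem-wood}. The only point deserving a word of care is that the corollary quantifies over all $G$ on both sides, but since both input results are already in that form, this is automatic; one just notes the trivial implication "$\bar G$-monomial $\Rightarrow$ $\Aut_R(A)$-monomial" used in the contrapositive.
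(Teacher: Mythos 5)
Your proposal is correct and matches the paper's proof, which simply cites \Cref{thm:elgarem-wood} for the "if" direction and \Cref{thm:noncyclic-socle-swc} for the "only if" direction. The extra care you take in unfolding the "only if" direction (noting that a $G$-monomial map is a fortiori $\Aut_R(A)$-monomial) is in fact already absorbed into the statement of \Cref{thm:noncyclic-socle-swc}, so it is harmless but not needed at the level of the corollary.
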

\begin{proof}
	See \Cref{thm:elgarem-wood} and \Cref{thm:noncyclic-socle-swc}.
\end{proof}

Let $\omega: A \rightarrow \mathbb{C}$ be a function. For a positive integer $n$ define a weight function $\omega: A^n \rightarrow \mathbb{C}$, $\omega(a) = \sum_{i=1}^n \omega(a_i)$.
Consider a code $C \subseteq A^n$. We say that a map $f \in \Hom_R(C,A^n)$ is an \emph{$\omega$-preserving function} if for any $a \in A^n$, $\omega(a) = \omega(f(a))$.

Let $U(\omega) = \{ g \in \Aut_R(A) \mid \forall a\in A, \omega(g(a))= \omega(a) \}$ be a symmetry group of the weight.
An alphabet $A$ is said to have an extension property with respect to the weight function $\omega$ if for any linear code $C \subseteq A^n$, any $\omega$-preserving map $f \in \Hom_R(C,A^n)$ extends to an $U(\omega)$-monomial map.

\begin{corollary}\label{thm:anyweight-noncyclic-socle}
	Let $R$ be a finite ring with identity. Let $A$ be a finite $R$-module with a noncyclic socle. Let $\omega: A \rightarrow \mathbb{C}$ be arbitrary weight function.
	The alphabet $A$ does not have an extension property with respect to the weight $\omega$.
\end{corollary}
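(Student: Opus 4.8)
The plan is to reduce the statement about an arbitrary weight $\omega$ to the already-established \Cref{thm:noncyclic-socle-swc} about the symmetrized weight composition. The key observation is that the symmetry group $U(\omega) \leq \Aut_R(A)$ is itself a subgroup of $\Aut_R(A)$, so \Cref{thm:noncyclic-socle-swc} applies with $G = U(\omega)$: since $\soc(A)$ is noncyclic, $A$ fails the extension property with respect to $\swc_{U(\omega)}$. This produces, for some $n$, a code $C \subseteq A^n$ and an $\swc_{U(\omega)}$-isometry $f \in \Hom_R(C, A^n)$ that does not extend to a $U(\omega)$-monomial map. The goal is then to argue that this same $f$ witnesses the failure of the extension property for $\omega$.

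First I would verify that an $\swc_{U(\omega)}$-isometry is automatically an $\omega$-preserving map. This is the crux of the reduction: the weight $\omega$ is constant on each orbit $O \in A/U(\omega)$ (by the very definition of $U(\omega)$ as the stabilizer of $\omega$), so for any $a \in A^n$ one can write $\omega(a) = \sum_{i=1}^n \omega(a_i) = \sum_{O \in A/U(\omega)} \omega(O)\cdot \swc_{U(\omega)}(a)(O)$, where $\omega(O)$ denotes the common value of $\omega$ on the orbit $O$. Hence preserving $\swc_{U(\omega)}$ forces preserving $\omega$, and $f$ is indeed $\omega$-preserving. Second, the definition of the extension property for $\omega$ demands that $f$ extend to a $U(\omega)$-monomial map — but by construction $f$ does not even extend to an $\Aut_R(A)$-monomial map (this is the conclusion inherited from \Cref{thm:noncyclic-socle-swc}, whose proof via \Cref{thm:noncyclic-socle-swc} and \Cref{thm:swc_g} yields non-extendability to the full automorphism group), and a $U(\omega)$-monomial map is in particular an $\Aut_R(A)$-monomial map since $U(\omega) \leq \Aut_R(A)$ and hence $\overline{U(\omega)} \leq \overline{\Aut_R(A)} = \Aut_R(A)$. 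So $f$ cannot extend to a $U(\omega)$-monomial map either, and the extension property with respect to $\omega$ fails.

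I do not expect a serious obstacle here; the argument is essentially a bookkeeping reduction once one notices that $U(\omega)$ is a legitimate choice of $G$ in \Cref{thm:cor1}. The one point requiring a little care is the direction of the implication between the two isometry notions: it is $\swc_{U(\omega)}$-isometry $\Rightarrow$ $\omega$-preserving (the finer invariant controls the coarser one), which is exactly the direction needed since we are handed an $\swc_{U(\omega)}$-isometry and must conclude it is $\omega$-preserving. The converse would be false in general and is not needed. One should also note in passing that $U(\omega)$ could conceivably be trivial, but the argument goes through regardless: \Cref{thm:noncyclic-socle-swc} is stated for \emph{any} subgroup $G$, including $G = \{e\}$, and the construction in \Cref{thm:swc_g} in fact already produces an $\swc_{\{e\}}$-isometry, so the smaller the group the stronger the counterexample.
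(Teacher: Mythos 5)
Your proposal is correct and follows the same route as the paper: invoke \Cref{thm:noncyclic-socle-swc} with $G = U(\omega)$ to obtain an unextendable $\swc_{U(\omega)}$-isometry, then observe that it is automatically $\omega$-preserving. You simply spell out the detail the paper leaves implicit, namely that $\omega$ is constant on $U(\omega)$-orbits so that $\omega(a) = \sum_{O \in A/U(\omega)} \omega(O)\,\swc_{U(\omega)}(a)(O)$, which justifies the implication from $\swc_{U(\omega)}$-isometry to $\omega$-preserving; the extra remark about non-extendability to $\Aut_R(A)$-monomial maps is true but not needed, since \Cref{thm:noncyclic-socle-swc} already delivers non-extendability to $U(\omega)$-monomial maps directly.
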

\begin{proof}
	Let $C \subseteq A^n$ be a code and let $f \in \Hom_R(C, A^n)$ be an unextendable $\swc_{U(\omega)}$-isometry that exist due to \Cref{thm:noncyclic-socle-swc}. The map $f$ is then an $\omega$-preserving map and it does not extend to a $U(\omega)$-monomial map.
\end{proof}

\begin{remark}
	The length of the code with an unextendable isometries that is observed in \Cref{thm:noncyclic-socle-swc} and later used in \Cref{thm:anyweight-noncyclic-socle} can be large. For the code observed in \Cref{thm:swc_g} we can give a lower bound $n \geq \prod_{i = 1}^{k} (1 + q^i)$, which is proved in \cite{d4}. In a future paper we will show that for the spacial case $k = 1$ we can give an explicit construction for the code observed in \Cref{thm:swc_g}. The length of the resulting code is $q + 1$. Moreover the resulting unextendable $\swc_G$-isometry can be an automorphism. 
\end{remark}

\begin{remark}
	Unlike the case of the Hamming weight, for arbitrary weight function it is not true that the cyclic socle and pseudo-injectivity conditions lead to the extension property. For example, it is still an open question if the extension property holds for the Lee weight and a cyclic group alphabet, see \cite{barra}. In a recent work of Langevin, see \cite{langevin}, the extension property for the Lee weight with an alphabet that is a cyclic group of prime order is proved.
\end{remark}

\bibliographystyle{acm}
\bibliography{biblio}

\begin{thebibliography}{10}

\bibitem{assem}
{\sc Assem, A.}
\newblock On modules with cyclic socle.
\newblock {\em Journal of Algebra and Its Applications 15}, 8 (2016).

\bibitem{barra}
{\sc Barra, A., and Gluesing-Luerssen, H.}
\newblock Mac{W}illiams extension theorems and the local–global property for
  codes over {F}robenius rings.
\newblock {\em Journal of Pure and Applied Algebra 219}, 4 (2015), 703 -- 728.

\bibitem{dist-reg-graphs}
{\sc Brouwer, A.~E., Cohen, A.~M., and Neumaier, A.}
\newblock {\em Distance-regular graphs}.
\newblock Springer-Verlag, 1989.

\bibitem{dinh-lopez-1}
{\sc Dinh, Q.~H., and L\'opez-Permouth, S.~R.}
\newblock On the equivalence of codes over finite rings.
\newblock {\em Appl. Algebra Eng., Commun. Comput. 15}, 1 (June 2004), 37--50.

\bibitem{dinh-lopez}
{\sc Dinh, Q.~H., and L\'opez-Permouth, S.~R.}
\newblock On the equivalence of codes over rings and modules.
\newblock {\em Finite Fields and Their Applications 10}, 4 (2004), 615 -- 625.

\bibitem{d4}
{\sc Dyshko, S.}
\newblock Geometric approach to the {M}ac{W}illiams extension theorem for codes
  over modules.
\newblock Preprint. Available at arXiv:1507.05212.

\bibitem{elgarem-megahed-wood}
{\sc ElGarem, N., Megahed, N., and Wood, J.}
\newblock The extension theorem with respect to symmetrized weight
  compositions.
\newblock In {\em Coding Theory and Applications}, R.~Pinto, P.~Rocha~Malonek,
  and P.~Vettori, Eds., vol.~3 of {\em CIM Series in Mathematical Sciences}.
  Springer International Publishing, 2015, pp.~177--183.

\bibitem{goldberg}
{\sc Goldberg, D.~Y.}
\newblock A generalized weight for linear codes and a {W}itt-{M}ac{W}illiams
  theorem.
\newblock {\em Journal of Combinatorial Theory, Series A 29}, 3 (1980), 363 --
  367.

\bibitem{greferath-biinvariant}
{\sc Greferath, M., Honold, T., Fadden, C.~M., Wood, J.~A., and
  Zumbr{\"{a}}gel, J.}
\newblock Macwilliams' extension theorem for bi-invariant weights over finite
  principal ideal rings.
\newblock {\em J. Comb. Theory, Ser. {A} 125\/} (2014), 177--193.

\bibitem{greferath}
{\sc Greferath, M., Nechaev, A., and Wisbauer, R.}
\newblock Finite quasi-{F}robenius modules and linear codes.
\newblock {\em Journal of Algebra and Its Applications 03}, 03 (2004),
  247--272.

\bibitem{lam}
{\sc Lam, T.~Y.}
\newblock {\em A First Course in Noncommutative Rings}.
\newblock Springer, 1991.

\bibitem{lang}
{\sc Lang, S.}
\newblock {\em Algebra}.
\newblock Addison-Wesley series in mathematics. Addison-Wesley Publishing
  Company, Advanced Book Program, 1984.

\bibitem{langevin}
{\sc Langevin, P., Wood, J.~A., and Dyshko, S.}
\newblock On two determinants.
\newblock 2016.

\bibitem{macwilliams-phd61}
{\sc Macwilliams, F.~J.}
\newblock {\em Combinatorial Properties of Elementary Abelian Groups}.
\newblock Ph.d. thesis, Radcliffe College, 1962.

\bibitem{rota}
{\sc Rota, G.-C.}
\newblock On the foundations of combinatorial theory {I}. {T}heory of
  {M}\"{o}bius functions.
\newblock {\em Zeitschrift für Wahrscheinlichkeitstheorie und Verwandte
  Gebiete 2}, 4 (1964), 340--368.

\bibitem{ward-wood}
{\sc Ward, H.~N., and Wood, J.~A.}
\newblock Characters and the equivalence of codes.
\newblock {\em Journal of Combinatorial Theory, Series A 73}, 2 (1996), 348 --
  352.

\bibitem{wood-aut-iso}
{\sc Wood, J.~A.}
\newblock Isometries of additive codes.
\newblock 2014.

\bibitem{wood-foundations}
{\sc Wood, J.~A.}
\newblock Foundations of linear codes defined over finite modules: the
  extension theorem and the {M}ac{W}illiams identities.
\newblock In {\em Codes over rings}, vol.~6 of {\em Ser. Coding Theory
  Cryptol.} World Sci. Publ., Hackensack, NJ, 2009, pp.~124--190.

\bibitem{yaraneri}
{\sc Yaraneri, E.}
\newblock Intersection graph of a module.
\newblock {\em Journal of Algebra and Its Applications 12}, 05 (2013), 1250218.

\end{thebibliography}
\normalsize
\end{document}